\documentclass[12pt]{article}
\usepackage{fullpage}
\usepackage[top=2cm, bottom=4.5cm, left=2.5cm, right=2.5cm]{geometry}
\usepackage{amsmath,amsthm,amsfonts,amssymb,amscd}
\usepackage{lastpage}
\usepackage{enumerate}
\usepackage{fancyhdr}
\usepackage{mathrsfs}
\usepackage{xcolor}
\usepackage{xparse}
\usepackage{graphicx}
\usepackage{listings}
\usepackage{amssymb}
\usepackage{relsize}
\usepackage{centernot}
\usepackage{tasks}
\usepackage{mathtools}
\usepackage{braket}
\usepackage[backend=biber,style=numeric, sorting=none]{biblatex}
\bibliography{refs}


\usepackage[breaklinks]{hyperref}
\usepackage{breakurl}

\hypersetup{%
  colorlinks=true,
  linkcolor=blue,
  linkbordercolor={0 0 1}
}
 
\setlength{\parindent}{0.0in}
\setlength{\parskip}{0.05in}

\title{Trotterization in Quantum Theory}
\author{Physics Claire Kluber}
\date{}

\newcommand\paren[1]{ \left( #1 \right) }
\newcommand\abs[1]{\left \lvert #1 \right \rvert}
\newcommand{\norm}[1]{\left\lVert  #1\right\rVert}

\NewDocumentCommand{\evalat}{sO{\big}mm}{%
  \IfBooleanTF{#1}
   {\mleft. #3 \mright|_{#4}}
   {#3#2|_{#4}}%
}

\DeclareMathOperator\slim{s-lim}

\newtheorem*{claim}{Claim}

\begin{document}

\maketitle

\section*{Abstract}
Trotterization in quantum mechanics is an important theoretical concept in handling the exponential of noncommutative operators. In this communication, we give a mathematical formulation of the Trotter Product Formula, and apply it to basic examples in which the utility of Trotterization is evident. Originally, this article was completed in December 2020 as a report under the mentorship of Esteban Cárdenas for the University of Texas at Austin Mathematics Directed Reading Program (DRP). However, the relevance of Trotterization in reducing quantum circuit complexity has warranted the release of a revised and more formal version of the original. Thus, we present a mathematical perspective on Trotterization, including a detailed sketch of a formal proof of the Trotter Product Formula. 

\tableofcontents{}

\section{Introduction}
One of the major goals of quantum mechanics is finding solutions, called wavefunctions/eigenfunctions, to the time-independent Schr\"odinger wave equation. For a given time-independent Hamiltonian operator $\hat{H}$ on a Hilbert space $\mathcal{H}$, the Schr\"odinger equation is given by
\begin{equation} \label{eq:schro}
    \hat{H}\ket{\Psi} = E\ket{\Psi}
\end{equation}
where $\ket{\Psi} \in \mathcal{H}$ is an eigenfunction and $E \in \mathbb{R}$ is an energy eigenvalue. Alone, this equation only yields the energy values and the stationary states of the physical system. To get the time-dependent eigenvector $\ket{\Psi (t)}$, one needs the unitary operator $U(t)$:
\begin{gather}
    U(t) = e^{-i\hat{H}t/\hbar}, \label{eq:unitary} \\
    \ket{\Psi (t)} = U(t) \ket{\Psi} = e^{-i\hat{H}t/\hbar} \ket{\Psi}, \label{eq:time-dep}
\end{gather}
where $\hbar = h / 2\pi$ is Planck's reduced constant.

In practice, it can be hard to compute this operator exponential, so let us focus on the simplest example: When the (finite-dimensional in this case) Hamiltonian operator $\hat{H}$ is given by the diagonal matrix
\begin{equation} \label{eq:fin-dim}
    \hat{H} = \begin{pmatrix}
        E_1 & 0 & \cdots & 0 \\
        0 & E_2 & \cdots & 0 \\
        \vdots & \vdots & \ddots & \vdots \\
        0 & 0 & \cdots & E_n
    \end{pmatrix}.
\end{equation}
Substituting this into Equation \ref{eq:schro}, the eigenfunctions are given by the standard basis
\begin{equation}
    \ket{\psi_1} = \begin{pmatrix} 1 \\ 0 \\ \vdots \\ 0 \end{pmatrix}, 
    \quad \ket{\psi_2} = \begin{pmatrix} 0 \\ 1 \\ \vdots \\ 0 \end{pmatrix}, \quad \cdots,
    \quad \ket{\psi_n} = \begin{pmatrix} 0 \\ 0 \\ \vdots \\ 1 \end{pmatrix}
\end{equation}
with corresponding eigenvalues $E_1, \cdots, E_n$, respectively. From this, an arbitrary stationary state $\ket{\Psi}$ of the entire quantum system is simply a complex ($c_i \in \mathbb{C}$) linear combination:
\begin{equation}
    \ket{\Psi} = \sum_{i=1}^n c_i \ket{\psi_i}, \quad \sum_{i=1}^n \abs{c_i}^2 = 1.
\end{equation}
This is the time-independent solution. To get to the full time-dependent solution $\ket{\Psi(t)}$, we simply need to compute $U(t)$. This is made easy by the fact that the exponential of a diagonal matrix is just the diagonal matrix of element exponentials. Concretely,
\begin{align}
    U(t) &= \exp\left[\begin{pmatrix}
        -iE_1t/\hbar & 0 & \cdots & 0 \\
        0 & -iE_2t/\hbar & \cdots & 0 \\
        \vdots & \vdots & \ddots & \vdots \\
        0 & 0 & \cdots & -iE_nt/\hbar
    \end{pmatrix}\right] \nonumber \\
    &= \begin{pmatrix}
        e^{-iE_1t/\hbar} & 0 & \cdots & 0 \\
        0 & e^{-iE_2t/\hbar} & \cdots & 0 \\
        \vdots & \vdots & \ddots & \vdots \\
        0 & 0 & \cdots & e^{-iE_nt/\hbar}
    \end{pmatrix}.
\end{align}
Therefore, the time-dependent state $\ket{\Psi(t)}$ is given by
\begin{align}
    \ket{\Psi(t)} &= U(t) \ket{\Psi} \nonumber \\
    &= \begin{pmatrix}
        e^{-iE_1t/\hbar} & 0 & \cdots & 0 \\
        0 & e^{-iE_2t/\hbar} & \cdots & 0 \\
        \vdots & \vdots & \ddots & \vdots \\
        0 & 0 & \cdots & e^{-iE_nt/\hbar}
    \end{pmatrix} \paren{\sum_{i=1}^n c_i \ket{\psi_i}} \nonumber \\
    &= \sum_{i=1}^n c_i e^{-iE_it/\hbar} \ket{\psi_i}.
\end{align}

This calculation was simple because the Hamiltonian $\hat{H}$ was diagonal, but for the general case we need to use the series definition of exponentiation. Given a (potentially infinite-dimensional) self-adjoint (Hermitian) operator $S$ defined on $\mathcal{H}$, its exponential is defined by
\begin{equation} 
    e^S = \sum_{n=0}^\infty \frac{S^n}{n!}.
\end{equation}
When $S$ is finite-dimensional, simply diagonalizing the matrix is sufficient to compute $e^S$, and therefore solve the eigenvalue problem. However, when $S$ is infinite dimensional, the problem becomes much more complicated. To help reduce this complexity, we can consider decomposing an operator into a sum. 

If we have another self-adjoint operator $T$ defined on $\mathcal{H}$ such that $[S,T] = 0$ (i.e., $ST = TS$), then the exponential of the sum splits:
\begin{equation}
    e^{S+T} \xi = e^S e^T \xi, \quad \forall\xi \in \mathcal{H} \label{eq:split}
\end{equation}
We can actually prove this in a relatively straight-forward manner from the definition of the operator exponential by applying the Binomial Theorem:
\begin{align*}
    e^{S+T} &= \sum_{n=0}^\infty \frac{(S+T)^n}{n!} \\
    &= \sum_{n=0}^\infty \sum_{k=0}^n \frac{1}{n!} \binom{n}{k} S^{n-k}T^k \\
    &= \sum_{n=0}^\infty \sum_{k=0}^n \frac{1}{n!} \frac{n!}{(n-k)!k!} S^{n-k}T^k \\
    &= \sum_{n=0}^\infty \sum_{k=0}^n \frac{1}{(n-k)!k!} S^{n-k}T^k.
\end{align*}
With this result, notice that every possible product of $S^m$ with $T^n$ occurs for $m,n \in \mathbb{Z}^+ \cap \{0\}$. Thus, rewrite the sum as follows:
\begin{align*}
    \sum_{n=0}^\infty \sum_{k=0}^n \frac{1}{(n-k)!k!} S^{n-k}T^k &= \sum_{n=0}^\infty \sum_{m=0}^\infty \frac{1}{m!n!} S^mT^n \\
    &= \paren{\sum_{m=0}^\infty  \frac{1}{m!} S^m} \paren{\sum_{n=0}^\infty  \frac{1}{n!} T^n} \\
    &= e^S e^T.
\end{align*}

In the case where $S$ and $T$ do not commute, this argument fails because the Binomial Theorem no longer applies. It seems like there's no way to generalize this argument for the noncommutative case. For example, in the binomial expansion of $(S+T)^3$ with $[S,T] \ne 0$, $STS \ne S^2 T$, and so it's impossible to collect terms on the left and right sides of the overall sum. A somewhat surprising result called the Trotter Product Formula is needed for noncommutative operators, which in a strong sense approximates the above splitting of the exponential (Equation \ref{eq:split}). This is what we explore for the remainder of this communication.

Before the next section, it's important to bring up one prerequisite: unitary evolution groups. Put simply, a unitary evolution group is a group of unitary operators $G(t)$ for $t \in \mathbb{R}$ given by a homomorphism of $(\mathbb{R}, +)$ (i.e., for all $s,t \in \mathbb{R}$, $G(s+t) = G(s)G(t)$). An important notion is the infinitesimal generator $T$ of a unitary evolution group, given pointwise by
\begin{equation} \label{eq:inf-gen} 
    T \xi = i \lim_{h \xrightarrow{} 0} \frac{G(h) - I}{h} \xi.
\end{equation}
It turns out that (with some technical regularity of the mapping $t \to G(t)$, namely weak measurability) $T$ is self-adjoint (by Stone's Theorem on one-parameter unitary groups). Furthermore, it also turns out that any self-adjoint operator $T$ corresponds to the (strongly/pointwise continuous) unitary evolution group $U_T(t) = e^{-itT}$. This correspondence will be useful in formally proving the Trotter Product Formula.

Furthermore, note that the domain for $S$ and $T$ was always taken to be the entire Hilbert space. This is because if an operator $S$ is bounded and self-adjoint, then its dense domain $D(S)$ can be extended to all of $\mathcal{H}$ by constructing a new unique extended operator $\overline{S}$ from $S$ for which $\overline{S} = S$ on $D(S)$. In general, domains only matter for unbounded operators, which we consider in the next section in the statement of the Trotter Product Formula proof. 

\section{The Trotter Product Formula}
The following proof is adapted from de Oliveria's textbook presentation \cite{oliveira}. 
\begin{claim}
    Let $S$ and $T$ be (potentially unbounded) self-adjoint operators on $\mathcal{H}$. Then, for every $t \in \mathbb{R}$ and for all $\xi \in D(S+T) = D(S) \cap D(T) = \mathcal{D}$,
    \begin{equation} \label{eq:claim}
        \lim_{n\xrightarrow{}\infty} \norm{e^{-it(S+T)}\xi - \paren{e^{-i\frac{t}{n}S}e^{-i\frac{t}{n}T}}^n\xi} = 0.
    \end{equation}
    In other words, the following strong (pointwise) operator limit holds:
    \begin{equation*}
        \smashoperator\slim\limits_{n\to\infty} \paren{e^{-i\frac{t}{n}S}e^{-i\frac{t}{n}T}}^n = e^{-it(S+T)}.
    \end{equation*}
\end{claim}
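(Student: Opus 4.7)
The plan is to abbreviate $U(t) = e^{-it(S+T)}$ and $V(t) = e^{-itS}e^{-itT}$, and to compare $V(t/n)^n$ with $U(t) = U(t/n)^n$ via the algebraic telescoping identity
$$V(t/n)^n - U(t/n)^n = \sum_{k=0}^{n-1} V(t/n)^{k}\bigl[V(t/n) - U(t/n)\bigr]U(t/n)^{n-1-k}.$$
Since $U(s)$ and $V(s)$ are products of unitaries and hence isometric, applying this to $\xi$ and invoking the triangle inequality yields
$$\norm{V(t/n)^n \xi - U(t)\xi} \le \sum_{k=0}^{n-1} \norm{\bigl[V(t/n) - U(t/n)\bigr]U((n{-}1{-}k)t/n)\xi}.$$
Thus the whole problem reduces to controlling the single-step discrepancy $V(s) - U(s)$ applied to points of the orbit $K = \{U(\tau)\xi : \tau \in [0,t]\}$.

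The next step is a first-order Taylor analysis in $s$. For any $\eta \in \mathcal{D}$, Equation \ref{eq:inf-gen} and the product rule for strongly continuous unitary groups give
$$\lim_{s\to 0}\frac{V(s)\eta - \eta}{s} \;=\; -iS\eta - iT\eta \;=\; -i(S+T)\eta \;=\; \lim_{s\to 0}\frac{U(s)\eta - \eta}{s},$$
so $\bigl(V(s) - U(s)\bigr)\eta = o(s)$ pointwise on $\mathcal{D}$. If this little-$o$ estimate were uniform for $\eta$ on the orbit $K$, the telescoped sum above would be bounded by $n\cdot o(t/n) \to 0$, and we would be done.

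The main obstacle, and the technical heart of the argument, is promoting that pointwise $o(s)$ bound to one which is uniform over $K$. The subtlety is that $K$ need not lie inside the algebraic intersection $\mathcal{D}$: the group $U(\tau)$ is generated by the closure $\overline{S+T}$ and preserves $D(\overline{S+T})$, but not in general $D(S) \cap D(T)$. The remedy is to work in the graph norm $\norm{\eta}_{S+T}^2 = \norm{\eta}^2 + \norm{(S+T)\eta}^2$: on this norm $U(\tau)$ acts isometrically, so for $\xi \in \mathcal{D} \subset D(\overline{S+T})$ the map $\tau \mapsto U(\tau)\xi$ is graph-norm continuous and $K$ is graph-norm compact. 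One then upgrades the Taylor expansions above to a quantitative remainder estimate $\norm{(V(s) - U(s))\eta} \le s\cdot \varepsilon(s)\norm{\eta}_{S+T}$ with $\varepsilon(s) \to 0$; combining this with an $\varepsilon$-net covering of the compact orbit $K$ transfers pointwise to uniform convergence at the rate required for $n\cdot o(t/n)$ to vanish.

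One implicit hypothesis deserves naming before closing: for $e^{-it(S+T)}$ to denote a genuine unitary group the operator $S+T$ must be essentially self-adjoint on $\mathcal{D}$, an assumption automatic in the commuting case treated earlier but nontrivial in general. In most applications it is supplied by the Kato--Rellich theorem whenever $T$ is $S$-bounded with relative bound strictly less than $1$. Once this is in place, the telescoping identity together with the uniform-on-compacts limit assemble into the desired strong-operator convergence in Equation \ref{eq:claim}.
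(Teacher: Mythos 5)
Your proposal is essentially the paper's own argument: the same telescoping identity $A^n-B^n=\sum_j A^j(A-B)B^{n-1-j}$, the same reduction to the generator difference quotients vanishing at rate $o(s)$ on $\mathcal{D}$, and the same upgrade from pointwise to uniform convergence over the graph-norm-compact orbit $\{e^{-is(S+T)}\xi : |s|\le|t|\}$ via a finite covering (the paper packages the needed equiboundedness through the Uniform Boundedness Principle). Your closing remark that $S+T$ must actually be self-adjoint (or essentially so) on $\mathcal{D}$ is a hypothesis the paper uses implicitly when it asserts closedness of $S+T$, so flagging it is apt rather than a divergence in method.
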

\begin{proof}    
    Let $h \in \mathbb{R}$ such that $h \ne 0$ and $\xi \in \mathcal{D}$; then, define $u_h(\xi)$ as 
    \begin{equation*}
        u_h(\xi) = \frac{1}{h} \paren{e^{-ihS}e^{-ihT} - e^{-it(S+T)}}.
    \end{equation*}
    Now, this can be rewritten as
    \begin{equation*}
        u_h(\xi) = \frac{e^{-ihS} - I}{h} \xi + e^{-ihS}\frac{e^{-ihT} - I}{h} \xi - \frac{e^{-ih(S+T)} - I}{h} \xi
    \end{equation*}
    where $I$ is the identity operator on $\mathcal{H}$. Now, we can use the fact that these resemble unitary evolution group generators (Equation \ref{eq:inf-gen}). Taking the limit $h \xrightarrow{} 0$ for the first and third terms yields $S \xi$ and $-(S+T)\xi$, respectively. For the second term, the Dominated Convergence Theorem \cite{stein-real} applies, so that it yields $T \xi$ as $h \xrightarrow{} 0$. Thus, $\lim_{h\xrightarrow{} 0} u_h(\xi) = 0$.
    
    Since $u_h$ is linear and bounded, and since the operator $S+T$ is closed (because the sum of two self-adjoint operators is self-adjoint and every self-adjoint operator is closed), one can show through the Uniform Boundedness Principle, as in de Oliveria's work \cite{oliveira}, that
    \begin{equation}
        \lim_{h \xrightarrow{} 0} \sup_{\abs{s} < \abs{t}} \norm{u_h(\xi_s)} = 0,
    \end{equation}
    where $\xi_s$ is given by the unitary evolution group $\xi_s = e^{-is(S+T)}$. Because the unitary evolution group is strongly continuous, it follows that $J_{\xi, t} = \set{\xi_s | \abs{s} \leq \abs{t}}$ is compact and totally bounded in $J_{\xi, t}$ under the graph norm of $S+T$. Because $J_{\xi, t}$ is totally bounded, it can be covered by a finite number of open balls. This leads to an interpolation argument: any $\xi_s \in J_{\xi, t}$ lies inside one of those balls, and from that we can upper bound the graph norm with a distance term that vanishes as $h \xrightarrow{} 0$. Thus, since $\xi_s \in J_{\xi, t}$ is arbitrary, this bound also applies to the $\xi_s$ that maximizes $\norm{u_h(\xi_s)}$. 
    
    Now that we have shown this, we need to relate $u_h(\xi)$ with the difference in Equation \ref{eq:claim}. For bounded operators $A, B$ and any $n \in \mathbb{Z}^+$, one can show that
    \begin{equation*}
        A^n - B^n = \sum_{j=0}^{n-1} A^j (A-B)B^{n-1-j}. 
    \end{equation*}
    Substituting $A = e^{-it(S+T)/n}$ (so that $A^n = e^{-it(S+T)}$) and similarly with $B = e^{-i\frac{t}{n}S}e^{-i\frac{t}{n}T}$, 
    \begin{align*}
        &\paren{e^{-it(S+T)/n}}^n - \paren{e^{-itS/n}e^{-itT/n}}^n \\
        &= \sum_{j=0}^{n-1} \paren{e^{-itS/n}e^{-itT/n}}^j \paren{e^{-it(S+T)/n}-e^{-itS/n}e^{-itT/n}}\paren{e^{-it(S+T)/n}}^{n-1-j}
    \end{align*}
    Therefore, we can take the norm and apply it to an arbitrary $\xi \in \mathcal{D}$. Applying the triangle inequality along with the fact that the norm of unitary operators is unity, 
    \begin{align*}
        &\norm{\sum_{j=0}^{n-1} \paren{e^{-itS/n}e^{-itT/n}}^j \paren{e^{-it(S+T)/n}-e^{-itS/n}e^{-itT/n}}\paren{e^{-it(S+T)/n}}^{n-1-j}\xi} \\
        &\leq \sum_{j=0}^{n-1} \norm{\paren{e^{-itS/n}e^{-itT/n}}^j \paren{e^{-it(S+T)/n}-e^{-itS/n}e^{-itT/n}}\paren{e^{-it(S+T)/n}}^{n-1-j}\xi} \\
        &\leq \sum_{j=0}^{n-1} \norm{\paren{e^{-itS/n}e^{-itT/n}}^j}\norm{\paren{e^{-it(S+T)/n}-e^{-itS/n}e^{-itT/n}}\paren{e^{-it(S+T)/n}}^{n-1-j}\xi} \\
        &= \sum_{j=0}^{n-1} \norm{ \paren{e^{-it(S+T)/n}-e^{-itS/n}e^{-itT/n}}\paren{e^{-it(S+T)/n}}^{n-1-j}\xi} 
    \end{align*}
    Now, the only dependence we have on $j$ is in the last term of the product. If we define $s_j = t(n-1-j)/n$, then that term becomes $e^{-is_j(S+T)}$. Because $\abs{s_j} < \abs{t}$, $\{s_j\}$ is a subset of the interval $[-\abs{t}, \abs{t}]$. Therefore, we upper bound the previous expression with a supremum over the entire interval:
    \begin{align*}
        &\sum_{j=0}^{n-1} \norm{ \paren{e^{-it(S+T)/n}-e^{-itS/n}e^{-itT/n}}e^{-is_j(S+T)}\xi} \\ 
        &\leq  n \sup_{\abs{s} < \abs{t}} \norm{ \paren{e^{-it(S+T)/n}-e^{-itS/n}e^{-itT/n}}e^{-is(S+T)}\xi} \\
        &= n \sup_{\abs{s} < \abs{t}} \norm{ \paren{e^{-it(S+T)/n}-e^{-itS/n}e^{-itT/n}}\xi_s} 
    \end{align*}
    If we let $h = \abs{t}/n$, then the last expression becomes
    \begin{align*}
        &\frac{\abs{t}}{h}\sup_{\abs{s} < \abs{t}} \norm{ \paren{e^{-it(S+T)/n}-e^{-itS/n}e^{-itT/n}}\xi_s} \\
        &=\abs{t}\sup_{\abs{s} < \abs{t}} \norm{ \frac{1}{h}\paren{e^{-it(S+T)/n}-e^{-itS/n}e^{-itT/n}}\xi_s} \\
        &= \abs{t}\sup_{\abs{s} < \abs{t}} \norm{u_h(\xi_s)}.
    \end{align*}
    Therefore, as $n \xrightarrow{} \infty$, $h \xrightarrow{} 0$ so that 
    \begin{equation*}
         \paren{e^{-i\frac{t}{n}S}e^{-i\frac{t}{n}T}}^n\xi \xrightarrow{} e^{-it(S+T)}\xi, \quad \xi \in \mathcal{D}.
    \end{equation*}
\end{proof}

\section{Applications and Implications}
Given a finite-dimensional $n \times n$ (complex) Hermitian matrix $A$, its matrix exponential can be computed exactly by diagonalizing. By the complex spectral theorem for finite-dimensional matrices, $A$ can be decomposed as $A = UDU^\dag$, where $U$ is a unitary matrix, $D$ is a diagonal matrix, and $\paren{\cdot}^\dag$ denotes the conjugate transpose. Using the fact that $A^m = (UDU^\dag)^m = UD^m U^\dag$ for any $m \in \mathbb{Z}^+$, it turns out that $e^A$ is given by
\begin{equation} \label{eq:slow}
    e^A = Ue^D U^\dag.
\end{equation}
However, when the dimensionality $n$ is very large, diagonalizing $A$ can be computationally difficult. Therefore, an approximation is needed to compute matrix exponentials in general instead of using Equation \ref{eq:slow} directly. 

As a very simple example, we can approximate the matrix exponential $e^{A+B}$ by Trotterization for noncommuting matrices $A$ and $B$. Using the series definition of the matrix exponential for sufficiently large $N$,
\begin{equation*}
    e^{A/N} \approx I + \frac{A}{N}, \quad e^{B/N} \approx I + \frac{B}{N}.
\end{equation*}
As a consequence of the Trotter Product Formula,
\begin{align*}
    e^{A+B} &\approx \paren{e^{A/N}e^{B/N}}^N 
    &\approx \left[\paren{I + \frac{A}{N}}\paren{I + \frac{B}{N}}\right]^N 
    &=\left[I + \frac{A}{N} + \frac{B}{N} + \frac{AB}{N^2}\right]^N 
    &\approx \left[I + \frac{A+B}{N}\right]^N.
\end{align*}
Matrix multiplication is still $O(n^3)$, but it is generally faster and more parallelizable than diagonalization. Furthermore, this can be implemented for large $N = 2^m$ using a repeated squaring algorithm \cite{repeated-squaring}. Another interesting observation is that the approximation is commutative with respect to the inputs $A$ and $B$. As a final thought, note the similarities between this equation and the limit definition for $e^x$ for $x \in \mathbb{R}$:
\begin{equation*}
    e^x = \lim_{n \xrightarrow{} \infty} \paren{1+\frac{x}{n}}^n.
\end{equation*}
For a more practical example of this principle applied, specifically a Trotterized Hamiltonian equation, see Whitfield et al. (2012) \cite{whitfield}.

Trotterization is also useful for analyzing operators that individually have "nice" operator exponentials, but when summed together have a complicated operator exponential. A good example of this is the unitary evolution group $e^{it(\hat{p}^2/2m + \hat{x})}$ corresponding to the Hamiltonian operator $\hat{H} = \frac{\hat{p}^2}{2m} + \hat{x}$ (i.e., $V(x) = x$). It is difficult to compute this exponential directly, but the exponential of $it\hat{p}^2/2m$ individually corresponds to the free Schr\"odinger kernel (which is well-understood mathematically), and the exponential of $it\hat{x}$ individually corresponds to a momentum translation. Thus, the exponential of the sum can be reduced to terms much easier to analyze.

Note that by no means is Trotterization the optimal decomposition of these operators. Trotterization is to an optimal decomposition as a Taylor expansion is to a least-squares curve. Therefore, it can be shown that there are more optimal circuits with as many or fewer gates \cite{lubasch}. 

\section{Conclusion}
Thus, we have proved the Trotter Product Formula in its full generality. Additionally, we have discussed its motivations from a practical perspective within quantum mechanics. This combination is powerful because it means that Trotterization can both be applied to practical quantum mechanics challenges, such as computing quantum circuits, and theoretical challenges, such as reducing quantum circuit complexity.

In some cases, the Trotter Product Formula admits a simpler form. When operators $A,B$ (on some Hilbert space $\mathcal{H}$) are anti-commutative, such that $AB = -BA$, then $e^{A+B}$ can be calculated using a generalization of the Binomial Theorem to q-commutative ($q=-1$) algebras. A detailed reference and derivation for this can be found in Scurlock (2020) \cite{Scurlock}. This can be practically applied to, for example, Pauli gate operations. A detailed work is utilizing this result is Zhao \& Yuan (2021) \cite{applied-anticommutative}.

\small
\printbibliography

\end{document}